\theoremstyle{definition}
\newtheorem{theorem}{Theorem}[section]
\newtheorem{corollary}[theorem]{Corollary}
\newtheorem{proposition}[theorem]{Proposition}
\newtheorem{remark}[theorem]{Remark}
\def\@seccntformat#1{\@ifundefined{#1@cntformat}%
	{\csname the#1\endcsname\quad}
	{\csname #1@cntformat\endcsname}
}
\newif\ifShowComments
\def\strutdepth{\dp\strutbox}
\def\druk#1{\strut\vadjust{\kern-\strutdepth
        {\vtop to \strutdepth{%
                \baselineskip\strutdepth\vss
                        \llap{\hbox{#1}\quad}\null}}}}
\title{\bf
%
The $m$th Gini index estimator: Unbiasedness for gamma populations
}
\author{
\text{Roberto Vila}$^{1}$\thanks{Corresponding author: Roberto Vila, email: {rovig161@gmail.com}
}
\,\, and
\text{Helton Saulo}$^{1,2}$ 
\\
{\small $^{1}$ Department of Statistics, University of Brasilia, Brasilia, Brazil}\\
{\small $^{2}$ Department of Economics, Federal University of Pelotas, Pelotas, Brazil}\\
}
\begin{document}
	\maketitle 	
	\begin{abstract}
This paper establishes the theoretical result that the sample $m$th Gini index is an unbiased estimator of the population $m$th Gini index, introduced by \cite{Gavilan-Ruiz2024}, for gamma-distributed populations. An illustrative Monte Carlo simulation study confirms the unbiasedness of the sample $m$th Gini index estimator in finite samples.
	\end{abstract}
	\smallskip
	\noindent
	{\small {\bfseries Keywords.} {Gamma distribution, exponential distribution, $m$th Gini index, unbiased estimator.}}
	\\
	{\small{\bfseries Mathematics Subject Classification (2010).} {MSC 60E05 $\cdot$ MSC 62Exx $\cdot$ MSC 62Fxx.}}
%

\section{Introduction}\label{sec:01}

Measurement of inequality plays a fundamental role in several areas of knowledge, with the classical Gini index being the most famous; see \cite{Gini1936}. This index, however, has an important limitation, as two distributions that behave differently in terms of their concentration can exhibit the same Gini index value. Recently, \cite{Gavilan-Ruiz2024} proposed a generalization of the classical Gini index to overcome this drawback by considering the difference between the expected maximum and minimum values of random samples of size $m$. This novel formulation preserves the intuitive attractiveness of the Gini index while allowing for a finer discrimination between distributions that might have identical Gini values but different internal structures of inequality.

In this paper, we prove the unbiasedness of the sample $m$th Gini index estimator when the underlying population follows a gamma distribution. The $m$th Gini index, recently introduced by \cite{Gavilan-Ruiz2024}, generalizes the classical Gini coefficient by considering the expected range between the maximum and minimum of $m$ independent observations. Our main result extends and generalizes the important finding of \cite{Baydil2025}, who established that the sample Gini coefficient is an unbiased estimator of the population Gini coefficient for a population having Gamma$(\alpha, \beta)$ distribution. Since the classical Gini coefficient corresponds to the case $m=2$ of the $m$th Gini index, our results not only recover the unbiasedness of the standard Gini coefficient as a particular case but also demonstrate that this property holds more broadly for any $m \geq 2$. The revelance of the results of this paper is to offer new theoretical insights into the behavior of inequality measures based on gamma populations and to strengthen the foundation for using $m$th Gini indices in practical applications.

The rest of this paper unfolds as follows. In Section \ref{sec:02}, we define the $m$th Gini index and derive its expression for gamma-distributed populations. In Section \ref{sec:03}, we establish the unbiasedness of the sample $m$th Gini index estimator and provide a closed-form expression for its expectation. In Section \ref{sec:04}, we present an illustrative Monte Carlo simulation study to corroborate our theoretical findings. Finally, in Section \ref{sec:05}, we provide concluding remarks.

\section{The $m$th Gini index for gamma distributions}\label{sec:02}
Let $X_1, X_2,\ldots, X_m$ be independent and identically distributed (iid) random variables with the same distribution as a non-negative random  $X$ with mean $\mu=\mathbb{E}(X)>0$. Throughout this work we will assume that $X_1, X_2,\ldots, X_m$ and $X$ are defined in the same sample space $\Omega$. Following Definition 1 of \cite{Gavilan-Ruiz2024}, for each integer $m\geqslant 2$, the $m$th Gini index of $X$ is defined as follows:
\begin{align}\label{m-pop-Gini}
	IG_m 
	\equiv IG_m(X)={\mathbb{E}\left[\max\{X_1,\ldots,X_m\}-\min\{X_1,\ldots,X_m\}\right]\over m\mu}.
\end{align}

\begin{remark}
Note that when $m=2$ the $m$th Gini index coincides with the classical definition of the Gini coefficient \citep{Gini1936}, denoted by $G$. That is,
\begin{align*}
G \equiv IG_2={\mathbb{E}[\, \vert X_1-X_2\vert \,]\over 2\mu}.
\end{align*}
\end{remark}

It is simple to observe that
\begin{align}\label{max-formula}
	\max\{X_{1},\ldots,X_{m}\}(\omega)
	&=
	\int_0^\infty 
	\mathds{1}_{[t,\infty)}( \max\{X_{1},\ldots,X_{m}\}(\omega))
	{\rm d}t
	\nonumber
	\\[0,2cm]
	&=
	\int_0^\infty 
	\left[
	1-
	\mathds{1}_{\bigcap_{k=1}^{m} X_{k}^{-1}((-\infty,t])}( \omega)
	\right]
	{\rm d}t,
	\quad 
	\forall\omega\in\Omega,
\end{align}
where $\mathds{1}_A$ is the indicator function of the set $A$ and $X_{k}^{-1}(B)=\{\omega\in\Omega: X_{k}(\omega)\in B \}$ is the inverse image of the Borel set $B$ through $X_{k}$. Similarly,
\begin{align}\label{min-formula}
	\min\{X_{1},\ldots,X_{m}\}(\omega)
	=
	\int_0^\infty 
	\mathds{1}_{\bigcap_{k=1}^{m} X_{k}^{-1}([t,\infty))}( \omega)
	{\rm d}t,
	\quad 
	\forall\omega\in\Omega.
\end{align}

Hence, by \eqref{max-formula},
\begin{align*}
	\mathbb{E}\left[\max\{X_1,\ldots,X_m\}\right]
	&=
	\int_0^\infty 
	\left[
	1-
	\mathbb{P}\left({\bigcap_{k=1}^{m} X_{k}^{-1}((-\infty,t])}\right)
	\right]
	{\rm d}t
	\\[0,2cm]
	&=
	\int_0^\infty 
	[
	1-
	\{\mathbb{P}\left(X^{-1}((-\infty,t])\right)\}^m
	]
	{\rm d}t
	=
	\int_0^\infty 
[
1-F^m_X(t)
]
{\rm d}t,
\end{align*}
where the fact that $X_1, X_2,\ldots, X_m$ are iid with the same distribution as $X$ was utilized. Analogously, by \eqref{min-formula},
\begin{align*}
	\mathbb{E}\left[\min\{X_1,\ldots,X_m\}\right]
	=
	\int_0^\infty [1-F_X(t)]^m {\rm d}t.
\end{align*}
Since 	$\mu=\mathbb{E}\left[X\right]
=
\int_0^\infty [1-F_X(t)] {\rm d}t$, the index $IG_m$ is characterized as follows:
\begin{align}\label{charact-m-Gini}
	IG_m
	=
	\dfrac{\displaystyle
		\int_0^\infty 
		[
		1-F^m_X(t)
		]
		{\rm d}t
		-
		\int_0^\infty [1-F_X(t)]^m {\rm d}t}{ \displaystyle
		m \int_0^\infty [1-F_X(t)] {\rm d}t}.
\end{align}

For reader convenience, in what follows we apply Equation \eqref{charact-m-Gini} to derive simple expressions for the $m$th index for the exponential model separately, despite it being a special case of gamma.
\begin{proposition}\label{nGini-exponential-prop}
	The $m$th Gini index for  $X\sim\exp(\lambda)$ (exponential distribution) is given by
	\begin{align}\label{nGini-exponential}
	IG_m 
	=
	{1\over m}
	\left[	
	{
		\sum_{k=1}^{m}
		\binom{m}{k}
		{(-1)^{k+1}\over k}-{1\over m}}
	\right].
\end{align}
\end{proposition}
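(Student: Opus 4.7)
The plan is to apply the characterization formula \eqref{charact-m-Gini} directly with the exponential survival function, reducing everything to three elementary integrals. For $X\sim\exp(\lambda)$ one has $F_X(t)=1-\mathrm{e}^{-\lambda t}$ and $1-F_X(t)=\mathrm{e}^{-\lambda t}$ for $t\geqslant 0$, so the denominator integral in \eqref{charact-m-Gini} is simply $\int_0^\infty \mathrm{e}^{-\lambda t}\,{\rm d}t=1/\lambda$, and the ``minimum'' integral is $\int_0^\infty \mathrm{e}^{-m\lambda t}\,{\rm d}t=1/(m\lambda)$.

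The only step that requires a moment of care is the ``maximum'' integral $\int_0^\infty[1-F_X^m(t)]\,{\rm d}t$. First I would apply the binomial theorem to $F_X^m(t)=(1-\mathrm{e}^{-\lambda t})^m=\sum_{k=0}^{m}\binom{m}{k}(-1)^k\mathrm{e}^{-k\lambda t}$, isolate the $k=0$ term so that
\begin{align*}
1-F_X^m(t)=\sum_{k=1}^{m}\binom{m}{k}(-1)^{k+1}\mathrm{e}^{-k\lambda t},
\end{align*}
and then integrate term by term (the sum is finite, so no convergence issue arises) to obtain $\int_0^\infty[1-F_X^m(t)]\,{\rm d}t=\frac{1}{\lambda}\sum_{k=1}^{m}\binom{m}{k}\frac{(-1)^{k+1}}{k}$.

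Finally I would substitute the three computed quantities into \eqref{charact-m-Gini}; the factor $1/\lambda$ cancels between numerator and denominator, leaving
\begin{align*}
IG_m=\frac{1}{m}\left[\sum_{k=1}^{m}\binom{m}{k}\frac{(-1)^{k+1}}{k}-\frac{1}{m}\right],
\end{align*}
which is exactly \eqref{nGini-exponential}. There is no real obstacle here; the argument is a short, essentially mechanical application of \eqref{charact-m-Gini} together with a binomial expansion, and the main thing to watch is the sign bookkeeping when pulling the $k=0$ term out of the expansion of $(1-\mathrm{e}^{-\lambda t})^m$.
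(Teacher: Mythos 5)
Your proposal is correct and follows exactly the paper's own argument: both compute the same three integrals (denominator equal to $1/\lambda$, minimum integral equal to $1/(m\lambda)$, and the maximum integral via Newton's binomial expansion) and substitute them into \eqref{charact-m-Gini}. No substantive difference; the sign bookkeeping you flag is handled identically in the paper.
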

\begin{proof}
	If $X\sim\exp(\lambda)$, then $\int_0^\infty [1-F_X(t)] {\rm d}t=\mathbb{E}[X]=1/\lambda$,
	$\int_0^\infty [1-F_X(t)]^m {\rm d}t=\int_0^\infty \exp(-m\lambda t) {\rm d}t=1/(m\lambda)$, and
	\begin{align*}
\int_0^\infty 
[
1-F^m_X(t)
]
{\rm d}t
&=
\int_0^\infty 
[
1-\{1-\exp(-\lambda t)\}^m
]
{\rm d}t
\\[0,2cm]
&=
\sum_{k=1}^{m}
\binom{m}{k}
(-1)^k
\int_0^\infty 
\exp(-k\lambda t)
{\rm d}t
=
{1\over \lambda}
\sum_{k=1}^{m}
\binom{m}{k}
{(-1)^{k+1}\over k},
	\end{align*}
	where Newton's binomial Theorem was applied.
Given the above identities and the characterization of $IG_m$ in \eqref{charact-m-Gini}, the result is immediate.
\end{proof}

\begin{remark}
	By taking $m=2$ in Proposition \ref{nGini-exponential-prop}, we get
$
G
		=
		IG_2
		=		
		{1/2}
$,
 for exponential distributions, consistent with prior research \citep{Deltas2003}.
\end{remark}

\begin{proposition}\label{nGini-gamma}
	The $m$th Gini index for  $X\sim\text{Gamma}(\alpha,\lambda)$ (gamma distribution) is given by
	\begin{align}\label{nGini-gamma-eq}
		IG_m 
		=
		{1\over m\alpha\Gamma^m(\alpha)} 
		\left[
		{	\displaystyle
			\int_0^\infty 
			\{
			\Gamma^m(\alpha)-\gamma^m(\alpha,s)
			\}
			{\rm d}s
			-
			\int_0^\infty 
			\Gamma^m(\alpha,s) {\rm d}s
			}
			\right],
	\end{align}
	where $\Gamma(a)$ is the (complete) gamma function, $\gamma(a,b)$ is the lower incomplete gamma function and $\Gamma(a,b)$ is the upper incomplete gamma function.
\end{proposition}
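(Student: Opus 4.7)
The plan is to apply the characterization \eqref{charact-m-Gini} directly, using the standard representation of the gamma distribution's CDF and survival function in terms of the lower and upper incomplete gamma functions. For $X \sim \text{Gamma}(\alpha, \lambda)$, the CDF is $F_X(t) = \gamma(\alpha, \lambda t)/\Gamma(\alpha)$ and the survival function is $1 - F_X(t) = \Gamma(\alpha, \lambda t)/\Gamma(\alpha)$. The denominator of \eqref{charact-m-Gini} is immediate, since $\int_0^\infty [1 - F_X(t)]\,{\rm d}t = \mathbb{E}[X] = \alpha/\lambda$.

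Next, I would handle the two integrals in the numerator separately. For $\int_0^\infty [1 - F_X^m(t)]\,{\rm d}t$, I would substitute $s = \lambda t$ (with ${\rm d}t = {\rm d}s/\lambda$) and factor out $1/\Gamma^m(\alpha)$ to obtain
\begin{align*}
\int_0^\infty [1 - F_X^m(t)]\,{\rm d}t
= \frac{1}{\lambda\,\Gamma^m(\alpha)} \int_0^\infty \bigl\{\Gamma^m(\alpha) - \gamma^m(\alpha, s)\bigr\}\,{\rm d}s.
\end{align*}
The same substitution applied to $\int_0^\infty [1 - F_X(t)]^m\,{\rm d}t$ yields
\begin{align*}
\int_0^\infty [1 - F_X(t)]^m\,{\rm d}t
= \frac{1}{\lambda\,\Gamma^m(\alpha)} \int_0^\infty \Gamma^m(\alpha, s)\,{\rm d}s.
\end{align*}

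Finally, I would plug these three expressions into \eqref{charact-m-Gini}; the $1/\lambda$ factors in the numerator and denominator cancel, which reflects the scale invariance of the Gini index and leaves a formula depending only on $\alpha$ and $m$, precisely matching \eqref{nGini-gamma-eq}.

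There is no real obstacle here beyond the mechanics: the argument is essentially a substitution plus a clean rearrangement, and convergence of all integrals is guaranteed by the exponential decay of $\Gamma(\alpha, s)$ together with the fact that $\Gamma^m(\alpha) - \gamma^m(\alpha, s)$ vanishes at the same rate as $\Gamma(\alpha, s)$ for large $s$. The only point that merits care is the justification for pulling constants out and for the change of variables uniformly across $m$; both are routine and follow from nonnegativity of the integrands and Tonelli's theorem.
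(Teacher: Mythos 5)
Your proposal is correct and follows essentially the same route as the paper's proof: both apply the characterization \eqref{charact-m-Gini}, write $1-F_X(t)=\Gamma(\alpha,\lambda t)/\Gamma(\alpha)$ and $F_X(t)=\gamma(\alpha,\lambda t)/\Gamma(\alpha)$, use $\int_0^\infty[1-F_X(t)]\,{\rm d}t=\alpha/\lambda$ for the denominator, and finish with the substitution $s=\lambda t$. Your added remarks on convergence and scale invariance are accurate but not needed beyond what the paper records.
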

\begin{proof}
	If $X\sim\text{Gamma}(\alpha,\lambda)$, then $\int_0^\infty [1-F_X(t)] {\rm d}t=\mathbb{E}[X]=\alpha/\lambda$,
\begin{align*}
	&\int_0^\infty [1-F_X(t)]^m {\rm d}t
	=
	{1\over \Gamma^m(\alpha)}
	\int_0^\infty 
	\Gamma^m(\alpha,\lambda t) {\rm d}t,
	\\[0,2cm]
	&
	\int_0^\infty 
	[
	1-F^m_X(t)
	]
	{\rm d}t
	=
	{1\over \Gamma^m(\alpha)}
	\int_0^\infty 
	[
	\Gamma^m(\alpha)-\gamma^m(\alpha,\lambda t)
	]
	{\rm d}t.
\end{align*}	
Consequently, by using \eqref{charact-m-Gini} and the the change of variable $s = \lambda t$ yield the desired result.
\end{proof}

\begin{remark}
By applying the well-known identity $\gamma(a,b)+\Gamma(a,b)=\Gamma(a)$ and Newton's binomial Theorem, it follows that
\begin{align*}
	\Gamma^m(\alpha)-\gamma^m(\alpha,s)
	=
	\Gamma^m(\alpha)-[\Gamma(\alpha)-\Gamma(\alpha,s)]^m
	=
	\sum_{k=1}^{m}\binom{m}{k}(-1)^{k+1}\Gamma^{m-k}(\alpha) \Gamma^k(\alpha,s).
\end{align*}
Consequently,
\begin{align*}
				\int_0^\infty 
	[
	\Gamma^m(\alpha)-\gamma^m(\alpha,s)
	]
	{\rm d}s
	=
	\sum_{k=1}^{m}\binom{m}{k}(-1)^{k+1}\Gamma^{m-k}(\alpha)
				\int_0^\infty 
\Gamma^k(\alpha,s)
{\rm d}s.
\end{align*}

On the other hand, by using the definition
$\Gamma(a,x)=\int_x^\infty t^{a-1}\exp(-t){\rm d}t$ of the upper incomplete gamma function, we get
	\begin{align}\label{int-k}
				\int_0^\infty 
\Gamma^k(\alpha,s)
{\rm d}s
&=
				\int_0^\infty 
\int_{\mathbb{R}^k_+}
\mathds{1}_{\{s\leqslant\min\{t_1,\ldots,t_k\}\}}
(t_1\cdots t_k)^{\alpha-1}\exp\{-(t_1+\cdots+t_k)\}{\rm d}\boldsymbol{t}
{\rm d}s
\nonumber
\\[0,2cm]
&=
\int_{\mathbb{R}^k_+}
(t_1\cdots t_k)^{\alpha-1}
\exp\{-(t_1+\cdots+t_k)\}
\int_0^{\min\{t_1,\ldots,t_k\}} 
1
{\rm d}s 
{\rm d}\boldsymbol{t}
\nonumber
\\[0,2cm]
&=
\int_{\mathbb{R}^k_+}
\min\{t_1,\ldots,t_k\}
(t_1\cdots t_k)^{\alpha-1}
\exp\{-(t_1+\cdots+t_k)\}
{\rm d}\boldsymbol{t},
	\end{align}
	where Tonelli's Theorem validates the interchange of integrals.
	
Thus, $\int_0^\infty 
[
\Gamma^m(\alpha)-\gamma^m(\alpha,s)
]
{\rm d}s$ and $				\int_0^\infty 
\Gamma^m(\alpha,s)
{\rm d}s$ in Proposition \ref{nGini-gamma} can be written as
\begin{multline*}
				\int_0^\infty 
[
\Gamma^m(\alpha)-\gamma^m(\alpha,s)
]
{\rm d}s
\\[0,2cm]
=
\sum_{k=1}^{m}\binom{m}{k}(-1)^{k+1}\Gamma^{m-k}(\alpha)
\int_{\mathbb{R}^k_+}
\min\{t_1,\ldots,t_k\}
(t_1\cdots t_k)^{\alpha-1}
\exp\{-(t_1+\cdots+t_k)\}
{\rm d}\boldsymbol{t}
\end{multline*}
and
\begin{align*}
				\int_0^\infty 
\Gamma^m(\alpha,s)
{\rm d}s
=
\int_{\mathbb{R}^m_+}
\min\{t_1,\ldots,t_m\}
(t_1\cdots t_m)^{\alpha-1}
\exp\{-(t_1+\cdots+t_m)\}
{\rm d}\boldsymbol{t},
\end{align*}
respectively.
\end{remark}

\begin{remark}
	Taking $k=1$ in \eqref{int-k}, we have
	\begin{align*}
	\int_0^\infty 
	\Gamma(\alpha,s)
	{\rm d}s
	&=
	\int_{0}^\infty
	t_1^{\alpha}
	\exp(-t_1)
	{\rm d}t_1
	=
	\Gamma(\alpha+1)=\alpha\Gamma(\alpha),
	\end{align*}
	where in the last line the well-kwnon identity $\Gamma(x+1)=x\Gamma(x)$ was used.
	
    Setting $k=2$ in \eqref{int-k}, we get
	\begin{align}\label{id-int-gamma}
					\int_0^\infty 
	\Gamma^2(\alpha,s)
	{\rm d}s
	&=
	\int_{0}^\infty
	\int_{0}^\infty
	\min\{t_1,t_2\}
	(t_1t_2)^{\alpha-1}
	\exp\{-(t_1+t_2)\}
	{\rm d}t_1{\rm d}t_2
	\nonumber
	\\[0,2cm]
	&=
		\int_{0}^\infty \!
	t_2^{\alpha-1}
	\exp(-t_2)		
	\int_{0}^{t_2} \!
	t_1^{\alpha}
	\exp(-t_1)
	{\rm d}t_1{\rm d}t_2
	+
		\int_{0}^\infty \!
	t_2^{\alpha}
	\exp(-t_2)
	\int_{t_2}^\infty \!
	t_1^{\alpha-1}
	\exp(-t_1)
	{\rm d}t_1{\rm d}t_2
	\nonumber
			\\[0,2cm]
		&=
	\int_{0}^\infty
	t_2^{\alpha-1}
	\exp(-t_2)		
\gamma(\alpha+1,t_2)
	{\rm d}t_2
+
	\int_{0}^\infty
	t_2^{\alpha}
	\exp(-t_2)
[\Gamma(\alpha)-\gamma(\alpha,t_2)]
	{\rm d}t_2.
	\end{align}
		By using the identity \citep{DAurizio2016}:
		\begin{align*}
			\int_0^\infty 
			x^{a-1}
			\exp(-sx)\gamma(b,\theta x){\rm d}x
			=
			{\theta^b\Gamma(a+b)\over b(s+\theta)^{a+b}}
			\,_{2}F_{1}\left(a+b,1;b+1;{\theta\over s+\theta}\right),
		\end{align*}
		the identity in \eqref{id-int-gamma} becomes
	\begin{align}\label{id-int-gamma-1}
		\int_0^\infty 
		\Gamma^2(\alpha,s)
		{\rm d}s
		&=
	\alpha\Gamma^2(\alpha)
	+
	{\Gamma(2\alpha+1)\over (\alpha+1)2^{2\alpha+1}}
	\,_{2}F_{1}\left(2\alpha+1,1;\alpha+2;{1\over 2}\right)
	\nonumber
	\\[0,2cm]
	&-
	{\Gamma(2\alpha+1)\over \alpha 2^{2\alpha+1}}
	\,_{2}F_{1}\left(2\alpha+1,1;\alpha+1;{1\over 2}\right).
	\end{align}
	
By applying the identity \citep{WolframResearch2024}:
\begin{align*}
	\,_2F_1\left(a,b;{a+b\over 2}+1;{1\over 2}\right)
	=
	{2\sqrt{\pi}\Gamma({a+b\over 2}+1)\over a-b}
	\left[
	{1\over \Gamma({a\over 2})\Gamma({b+1\over 2})}
	-
	{1\over \Gamma({b\over 2})\Gamma({a+1\over 2})}
	\right],
\end{align*}
we obtain
\begin{align}\label{id-1}
	\,_{2}F_{1}\left(2\alpha+1,1;\alpha+2;{1\over 2}\right)
	=
		{\sqrt{\pi}\Gamma(\alpha+2)\over \alpha}
	\left[
	{1\over \Gamma({2\alpha+1\over 2})}
	-
	{1\over \Gamma({1\over 2})\Gamma(\alpha+1)}
	\right].
\end{align}
Furthermore, by using the identity \citep{WolframResearch2024}:
\begin{align*}
	\,_2F_1\left(a,b;{a+b\over 2};{1\over 2}\right)
	=
	\sqrt{\pi}\Gamma\left({a+b\over 2}\right)
	\left[
	{1\over \Gamma({a+1\over 2})\Gamma({b\over 2})}
+
{1\over \Gamma({a\over 2})\Gamma({b+1\over 2})}
	\right],
\end{align*}
we can write
\begin{align}\label{id-2}
	\,_{2}F_{1}\left(2\alpha+1,1;\alpha+1;{1\over 2}\right)
	=
	\sqrt{\pi}\Gamma\left(\alpha+1\right)
\left[
{1\over \Gamma(\alpha+1)\Gamma({1\over 2})}
+
{1\over \Gamma({2\alpha+1\over 2})}
\right].
\end{align}

Plugging \eqref{id-1} and \eqref{id-2} into \eqref{id-int-gamma-1} gives
	\begin{align}\label{iddd-1}
	\int_0^\infty 
	\Gamma^2(\alpha,s)
	{\rm d}s
	&=
	\alpha
	\Gamma^2(\alpha)
	-
	{\Gamma(2\alpha+1)\over 2^{2\alpha}\alpha}.
\end{align}

Thus, the integral $ \int_0^\infty [ \Gamma^2(\alpha)-\gamma^2(\alpha,s) ] {\rm d}s$ in Proposition \ref{nGini-gamma} becomes
\begin{align}\label{iddd-2}
	\int_0^\infty 
	[
	\Gamma^2(\alpha)-\gamma^2(\alpha,s)
	]
	{\rm d}s
	=
	2\Gamma(\alpha)
	\int_0^\infty 
	\Gamma(\alpha,s)
	{\rm d}s
	-
	\int_0^\infty 
	\Gamma^2(\alpha,s)
	{\rm d}s
	=
	\alpha
		\Gamma^2(\alpha)
	+
	{\Gamma(2\alpha+1)\over 2^{2\alpha}\alpha}.
\end{align}

Substituting \eqref{iddd-1} and \eqref{iddd-2} into Proposition \ref{nGini-gamma} yields
	\begin{align*}
	IG_2
	=
	{1\over 2\alpha\Gamma^2(\alpha)} 
	\left[
	{	\displaystyle
		\int_0^\infty 
		\{
		\Gamma^2(\alpha)-\gamma^2(\alpha,s)
		\}
		{\rm d}s
		-
		\int_0^\infty 
		\Gamma^2(\alpha,s) {\rm d}s
	}
	\right]
	=
		{\Gamma(2\alpha+1)\over 2^{2\alpha}\alpha^2\Gamma^2(\alpha)}.
\end{align*}
Finally, by applying Legendre duplication formula  \citep{Abramowitz1972}:
$
\Gamma(x)\Gamma\left(x+{1/ 2}\right)=2^{1-2x}\sqrt{\pi}\Gamma(2x),
$ in the above identity,
we get the following expression for the $2$th Gini index:
\begin{align}\label{main-formula}
	G
	=
	IG_2
	=
		{1\over 2\alpha\Gamma^2(\alpha)} 
	\left[
	{	\displaystyle
		\int_0^\infty 
		\{
		\Gamma^2(\alpha)-\gamma^2(\alpha,s)
		\}
		{\rm d}s
		-
		\int_0^\infty 
		\Gamma^2(\alpha,s) {\rm d}s
	}
	\right]
	=
	{\Gamma(\alpha+{1\over 2})\over\sqrt{\pi}\alpha\Gamma(\alpha)},
\end{align}
which corroborates the Gini coefficient formula for gamma distributions reported in \cite{McDonald1979}.
\end{remark}

\section{Unbiasedness of the $m$th Gini index estimator}\label{sec:03}

In this section (more precisely, in Theorem \ref{main-theorem}), we provide a general formula 
 for the expectation of the $m$th Gini index estimator,  newly defined in this work, as follows (for $m\leqslant n$)
\begin{align}\label{estimator}
\widehat{IG}_m
=
{(m-1)!\over (n-1)(n-2)\cdots(n-m+1)} \,
\dfrac{\displaystyle
\sum_{1\leqslant i_1<\cdots< i_m\leqslant n}
\left[
\max\{X_{i_1},\ldots,X_{i_m}\}-\min\{X_{i_1},\ldots,X_{i_m}\}
\right]
}{\displaystyle \sum_{i=1}^{n}X_i},
\end{align}
where $X_1, X_2,\ldots, X_m$ are iid observations from the  population $X$. 
\begin{remark}
Setting $m=2$ in \eqref{estimator}, we get the estimator of the Gini coefficient, denoted by $\widehat{G}$,
\begin{align*}
	\widehat{G}
	\equiv
\widehat{IG}_2
=
{1\over n-1} \,
\dfrac{\displaystyle\sum_{1\leqslant i<j\leqslant n} \vert X_i-X_j\vert}{\displaystyle \sum_{i=1}^{n}X_i},
\end{align*}
which initially was proposed by \cite{Deltas2003}.
\end{remark}

\begin{theorem}\label{main-theorem}
	Let $X_1, X_2,\ldots,X_m$ be independent copies of a  non-negative and absolutely continuous random variable $X$ with finite and positive expected value and common cumulative distribution function $F$. The following holds:
\begin{align*}
\mathbb{E}[\widehat{IG}_m]
&=
{n\over m}
\int_0^\infty
\int_0^\infty 
\left\{
\mathscr{L}_F^{m}(z)
-
\mathbb{E}^m\left[
\mathds{1}_{X^{-1}((-\infty,t])}
\exp\left(-Xz\right)
\right]
\right\}
{\rm d}t \,
\mathscr{L}_F^{n-m}(z)
{\rm d}z
\\[0,2cm]
&-
{n\over m}
\int_0^\infty
\int_0^\infty 
\mathbb{E}^m\left[
\mathds{1}_{X^{-1}([t,\infty))}
\exp\left(-Xz\right)
\right]
{\rm d}t \,
\mathscr{L}_F^{n-m}(z)
{\rm d}z,
\end{align*}
	where $\mathscr{L}_F(z)=\int_0^\infty \exp(-zx){\rm d}F(x)$ is the Laplace transform corresponding to distribution $F$.  In the
	above, we are assuming that the expectations and improper integrals involved exist.
\end{theorem}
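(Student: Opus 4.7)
The plan is to linearize the denominator $\sum_i X_i$ of $\widehat{IG}_m$ through the Laplace identity $1/S=\int_0^\infty e^{-zS}\,\mathrm{d}z$, applied to $S=\sum_{i=1}^n X_i$. Because $X$ is absolutely continuous and non-negative with positive mean, $X>0$ almost surely, hence $S>0$ almost surely and the identity holds almost everywhere. Substituting this into \eqref{estimator} and interchanging the outer expectation with the $\mathrm{d}z$ integral (by Tonelli, since all integrands are non-negative) reduces $\mathbb{E}[\widehat{IG}_m]$ to a sum over $\binom{n}{m}$ expectations of the form $\mathbb{E}\bigl[(\max\{X_{i_1},\ldots,X_{i_m}\}-\min\{X_{i_1},\ldots,X_{i_m}\})\,e^{-zS}\bigr]$, integrated against $\mathrm{d}z$.

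Next I would use the iid assumption to reduce every expectation in the sum to the canonical one indexed by $\{1,\ldots,m\}$. Writing $(n-1)(n-2)\cdots(n-m+1)=(n-1)!/(n-m)!$ and multiplying the resulting count $\binom{n}{m}$ by the prefactor $(m-1)!(n-m)!/(n-1)!$ collapses the overall constant to $n/m$, matching the theorem.

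Then the representations \eqref{max-formula} and \eqref{min-formula} rewrite $\max\{X_1,\ldots,X_m\}$ and $\min\{X_1,\ldots,X_m\}$ as $\mathrm{d}t$-integrals of indicators. A second Tonelli swap moves $\mathrm{d}t$ outside the expectation. Factoring $e^{-zS}=\prod_{k=1}^m e^{-zX_k}\cdot\prod_{k=m+1}^n e^{-zX_k}$ and invoking independence factorizes each expectation: the last $n-m$ factors contribute $\mathscr{L}_F^{n-m}(z)$, while the first $m$ factors yield $\mathscr{L}_F^m(z)-\mathbb{E}^m[\mathds{1}_{X^{-1}((-\infty,t])}\,e^{-zX}]$ for the max part (from $1-\prod_{k=1}^m \mathds{1}_{\{X_k\leq t\}}$) and $\mathbb{E}^m[\mathds{1}_{X^{-1}([t,\infty))}\,e^{-zX}]$ for the min part. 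Assembling the pieces produces the stated formula.

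The principal obstacle is essentially bookkeeping: juggling three Tonelli applications (the $\mathrm{d}z$ swap, the $\mathrm{d}t$ swap, and the factorization of products of expectations under independence) and verifying that the standing hypothesis that all involved improper integrals and expectations exist suffices to justify each step. Because every integrand is non-negative, Tonelli applies unconditionally, so no delicate $L^1$ integrability argument is needed beyond that standing assumption.
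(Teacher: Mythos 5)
Your proposal is correct and follows essentially the same route as the paper's proof: the Laplace-transform linearization of the denominator, the integral representations \eqref{max-formula} and \eqref{min-formula} for the max and min, Tonelli to justify the interchanges, the independence factorization into $\mathscr{L}_F^{n-m}(z)$ times the $m$-fold product terms, and the combinatorial collapse of $\binom{n}{m}$ against the prefactor to $n/m$. The only difference is an immaterial reordering (you invoke the iid reduction before the max/min representations, the paper after), so nothing further is needed.
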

\begin{proof}
By using the identity
\begin{align*}
\int_{0}^\infty \exp(-\xi z){\rm d}z={1\over \xi},
\quad \xi>0,
\end{align*}
with $\xi=\sum_{i=1}^{n}X_i$, we get
\begin{multline*}
	\mathbb{E}\left[
	\dfrac{\displaystyle
		\sum_{1\leqslant i_1<\cdots< i_n\leqslant n}
		\left[
		\max\{X_{i_1},\ldots,X_{i_m}\}-\min\{X_{i_1},\ldots,X_{i_m}\}
		\right]
	}{\displaystyle \sum_{i=1}^{n}X_i}
	\right]
	\\[0,2cm]
	=
	\sum_{1\leqslant i_1<\cdots< i_m\leqslant n}
	\mathbb{E}\left[\max\{X_{i_1},\ldots,X_{i_m}\}\int_0^\infty\exp\left\{-\left(\sum_{i=1}^{n}X_i\right)z\right\}{\rm d}z\right]
		\\[0,2cm]
-
	\sum_{1\leqslant i_1<\cdots< i_m\leqslant n}
	\mathbb{E}\left[\min\{X_{i_1},\ldots,X_{i_m}\}\int_0^\infty\exp\left\{-\left(\sum_{i=1}^{n}X_i\right)z\right\}{\rm d}z\right].
\end{multline*}
By applying identities in \eqref{max-formula} and \eqref{min-formula}, the above expression becomes
\begin{align*}
&=\sum_{1\leqslant i_1<\cdots< i_m\leqslant n}
\mathbb{E}\left[	
\int_0^\infty 
\left\{
1-
\mathds{1}_{\bigcap_{k=1}^{m} X_{i_k}^{-1}((-\infty,t])}
\right\}
{\rm d}t\int_0^\infty\exp\left\{-\left(\sum_{i=1}^{n}X_i\right)z\right\}{\rm d}z\right]
\\[0,2cm]
&-
\sum_{1\leqslant i_1<\cdots< i_m\leqslant n}
\mathbb{E}\left[
	\int_0^\infty 
\mathds{1}_{\bigcap_{k=1}^{m} X_{i_k}^{-1}([t,\infty))}
{\rm d}t
\int_0^\infty\exp\left\{-\left(\sum_{i=1}^{n}X_i\right)z\right\}{\rm d}z\right]
\\[0,2cm]
&=
	\sum_{1\leqslant i_1<\cdots< i_m\leqslant n} \!
	\int_0^\infty \!
\int_0^\infty \!\!
\mathbb{E}\left[	\!
\left\{
1-
\mathds{1}_{\bigcap_{k=1}^{m} X_{i_k}^{-1}((-\infty,t])}
\right\} \!
\exp\left\{\!-\left(\sum_{i=1}^{m}X_i\right)z\right\}
\exp\left\{\!-\left(\sum_{i=m+1}^{n}X_i\right)\! z\right\}
\right] \!
{\rm d}t
{\rm d}z
\\[0,2cm]
&-
\sum_{1\leqslant i_1<\cdots< i_m\leqslant n}
\int_0^\infty
\int_0^\infty 
\mathbb{E}\left[
\mathds{1}_{\bigcap_{k=1}^{m} X_{i_k}^{-1}([t,\infty))}
\exp\left\{-\left(\sum_{i=1}^{m}X_i\right)z\right\}
\exp\left\{-\left(\sum_{i=m+1}^{n}X_i\right)z\right\}
\right]
{\rm d}t
{\rm d}z,
\end{align*}
where the interchange of integrals is justified by Tonelli's Theorem. Since $X_1, X_2,\ldots, X_m$  are iid, the last expression simplifies to
\begin{align*}
	&=
\binom{n}{m} \!
	\int_0^\infty \! \!
	\int_0^\infty \!
	\mathbb{E}\left[	 \!
	\left\{
	1-
	\mathds{1}_{\bigcap_{k=1}^{m} X_{k}^{-1}((-\infty,t])}
	\right\} \!
	\exp\left\{-\left(\sum_{i=1}^{m}X_i\right)z\right\}
	\mathbb{E}\left[
	\exp\left\{-\left(\sum_{i=m+1}^{n}X_i\right)z\right\}
	\right]
	\right]
	{\rm d}t
	{\rm d}z
	\\[0,2cm]
	&-
\binom{n}{m}
	\int_0^\infty
	\int_0^\infty 
	\mathbb{E}\left[
	\mathds{1}_{\bigcap_{k=1}^{m} X_{k}^{-1}([t,\infty))}
	\exp\left\{-\left(\sum_{i=1}^{m}X_i\right)z\right\}
	\mathbb{E}\left[
	\exp\left\{-\left(\sum_{i=m+1}^{n}X_i\right)z\right\}
	\right]
	\right]
	{\rm d}t
	{\rm d}z
	\\[0,2cm]
	&=
	\binom{n}{m}
	\int_0^\infty
	\int_0^\infty 
	\mathbb{E}\left[	
	\left\{
	1-
	\mathds{1}_{\bigcap_{k=1}^{m} X_{k}^{-1}((-\infty,t])}
	\right\}
	\exp\left\{-\left(\sum_{i=1}^{m}X_i\right)z\right\}
	\right]
	{\rm d}t \,
	\mathscr{L}_F^{n-m}(z)
	{\rm d}z
	\\[0,2cm]
	&-
	\binom{n}{m}
	\int_0^\infty
	\int_0^\infty 
	\mathbb{E}\left[
	\mathds{1}_{\bigcap_{k=1}^{m} X_{k}^{-1}([t,\infty))}
	\exp\left\{-\left(\sum_{i=1}^{m}X_i\right)z\right\}
	\right]
	{\rm d}t \,
	\mathscr{L}_F^{n-m}(z)
	{\rm d}z
	\\[0,2cm]
	&=
	\binom{n}{m}
\int_0^\infty
\int_0^\infty 
\left\{
\mathscr{L}_F^{m}(z)
-
\mathbb{E}^m\left[
\mathds{1}_{X^{-1}((-\infty,t])}
\exp\left(-Xz\right)
\right]
\right\}
{\rm d}t \,
\mathscr{L}_F^{n-m}(z)
{\rm d}z
\\[0,2cm]
&-
	\binom{n}{m}
\int_0^\infty
\int_0^\infty 
\mathbb{E}^m\left[
\mathds{1}_{X^{-1}([t,\infty))}
\exp\left(-Xz\right)
\right]
{\rm d}t \,
\mathscr{L}_F^{n-m}(z)
{\rm d}z,
	\end{align*}
where, again, the iid assumption for $X_1,X_2,\ldots,X_m$ was used. Finally, by using the definition of the $m$th Gini estimator, given in \eqref{estimator}, the proof of theorem follows.
\end{proof}

In the remainder of this section we apply Theorem \ref{main-theorem} to derive explicit expressions for the expectation of estimator $\widehat{IG}_m$ in exponential and gamma populations, demonstrating the estimator's unbiasedness in both cases. Although the exponential case is a special case of gamma, we present both for clarity.

\begin{corollary}\label{main-theorem-1}
	Let $X_1, X_2,\ldots,X_m$ be independent copies of $X\sim\exp(\lambda)$. We have:
\begin{align*}
		\mathbb{E}[\widehat{IG}_m]
		=		
		{1\over m} \, 
		\left[
		\sum_{k=1}^{m}
		\binom{m}{k}
		{(-1)^{k+1} \over k}
		-
		{1\over m}
		\right]
		=
		IG_m,
\end{align*}
where $IG_m$ is the $m$th Gini index given in \eqref{nGini-exponential}. Hence, the estimator $\widehat{IG}_m$ is unbiased under exponential populations.
\end{corollary}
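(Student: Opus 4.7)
The plan is to apply Theorem \ref{main-theorem} directly with $F$ being the $\exp(\lambda)$ CDF and then evaluate each ingredient in closed form. The underlying calculation is elementary because the exponential density interacts nicely with the factor $\exp(-Xz)$: everything reduces to moments of a shifted exponential.

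\textbf{Step 1: Identify the building blocks.} For $X\sim\exp(\lambda)$, a one-line computation gives
\begin{align*}
\mathscr{L}_F(z) = \frac{\lambda}{\lambda+z},
\quad
\mathbb{E}\bigl[\mathds{1}_{X^{-1}((-\infty,t])}\exp(-Xz)\bigr]
= \frac{\lambda}{\lambda+z}\bigl[1-\exp\{-(\lambda+z)t\}\bigr],
\end{align*}
and
\begin{align*}
\mathbb{E}\bigl[\mathds{1}_{X^{-1}([t,\infty))}\exp(-Xz)\bigr]
= \frac{\lambda}{\lambda+z}\exp\{-(\lambda+z)t\}.
\end{align*}

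\textbf{Step 2: Compute the inner $t$-integrals.} Using Newton's binomial Theorem exactly as in the proof of Proposition \ref{nGini-exponential-prop}, the bracket in the first summand of Theorem \ref{main-theorem} becomes
\begin{align*}
\int_0^\infty\!\!\Bigl\{\mathscr{L}_F^m(z)-\mathbb{E}^m[\mathds{1}_{X^{-1}((-\infty,t])}e^{-Xz}]\Bigr\}\,{\rm d}t
=\Bigl(\tfrac{\lambda}{\lambda+z}\Bigr)^{\!m}\!\!\cdot\!\frac{1}{\lambda+z}\sum_{k=1}^{m}\binom{m}{k}\frac{(-1)^{k+1}}{k},
\end{align*}
and the corresponding integrand from the second summand evaluates to
\begin{align*}
\int_0^\infty \mathbb{E}^m[\mathds{1}_{X^{-1}([t,\infty))}e^{-Xz}]\,{\rm d}t
=\Bigl(\tfrac{\lambda}{\lambda+z}\Bigr)^{\!m}\!\!\cdot\!\frac{1}{m(\lambda+z)}.
\end{align*}

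\textbf{Step 3: Factor and integrate in $z$.} Since the bracketed constant $\sum_{k=1}^{m}\binom{m}{k}(-1)^{k+1}/k-1/m$ factors out, the remaining $z$-integral collapses to the single moment
\begin{align*}
\int_0^\infty \Bigl(\frac{\lambda}{\lambda+z}\Bigr)^{\!m}\cdot\frac{1}{\lambda+z}\cdot \mathscr{L}_F^{n-m}(z)\,{\rm d}z
= \int_0^\infty \frac{\lambda^n}{(\lambda+z)^{n+1}}\,{\rm d}z = \frac{1}{n}.
\end{align*}
Combining with the $n/m$ prefactor from Theorem \ref{main-theorem} yields $\mathbb{E}[\widehat{IG}_m] = (1/m)\bigl[\sum_{k=1}^{m}\binom{m}{k}(-1)^{k+1}/k-1/m\bigr]$, and recognizing this as $IG_m$ via \eqref{nGini-exponential} completes the proof.

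There is no real obstacle here beyond careful bookkeeping: the telescoping between the $(\lambda/(\lambda+z))^m$ from the $t$-integrals and the $(\lambda/(\lambda+z))^{n-m}$ factor of $\mathscr{L}_F^{n-m}(z)$ is what makes the dependence on $z$ collapse to a single power $(\lambda+z)^{-(n+1)}$, so the key is to keep the binomial-sum constant separated from the $z$-dependent part and to confirm that the prefactor $(n/m)\cdot(1/n)=1/m$ exactly matches the form of $IG_m$ in Proposition \ref{nGini-exponential-prop}.
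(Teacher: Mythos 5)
Your proposal is correct and follows essentially the same route as the paper: apply Theorem \ref{main-theorem} with the exponential Laplace transform and the two truncated expectations, evaluate the $t$-integral via Newton's binomial Theorem, and collapse the $z$-integral to $\int_0^\infty \lambda^n(\lambda+z)^{-(n+1)}\,{\rm d}z = 1/n$. All intermediate expressions match the paper's computation.
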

\begin{proof}
Assuming  $X\sim\exp(\lambda)$, simple calculation yields
	\begin{align*}
\mathbb{E}\left[
\mathds{1}_{X^{-1}((-\infty,t])}
\exp\left(-Xz\right)
\right]
=
{\lambda\over z+\lambda} \,
[1-\exp\left\{-(z+\lambda)t\right\}] 
	\end{align*}
	and
	\begin{align*}
	\mathbb{E}\left[
	\mathds{1}_{X^{-1}([t,\infty))}
	\exp\left(-Xz\right)
	\right]
	=
{\lambda\over z+\lambda}
\, 
\exp\left\{-(z+\lambda)t\right\}.
	\end{align*}
	As $\mathscr{L}_F(z)=\lambda/(z+\lambda)$, applying Theorem \ref{main-theorem} we have
\begin{align}\label{ngini-exp}
	\mathbb{E}[\widehat{IG}_m]
	&=
	{n\lambda^n\over m}
	\int_0^\infty
	\int_0^\infty 
	\left\{
	1
	-
	\left[
	1-\exp\left\{-(z+\lambda)t\right\}
	\right]^m
	\right\}
	{\rm d}t \,
	{1\over (z+\lambda)^n}
	{\rm d}z
	\nonumber
	\\[0,2cm]
	&-
	{n\lambda^n\over m}
	\int_0^\infty
	\int_0^\infty 
	\exp\left\{-m(z+\lambda)t\right\}
	{\rm d}t \,
	{1\over (z+\lambda)^n}
	{\rm d}z.
\end{align}

Using Newton's binomial Theorem, we obtain
\begin{align*}
		\int_0^\infty 
	\left\{
	1
	-
	\left[
	1-\exp\left\{-(z+\lambda)t\right\}
	\right]^m
	\right\}
	{\rm d}t
	&=
		\sum_{k=1}^{m}
	\binom{m}{k}
	(-1)^{k+1} 
	\int_0^\infty 
	\exp\left\{-k(z+\lambda)t\right\}
	{\rm d}t
	\\[0,2cm]
	&=
	{1\over z+\lambda}
		\sum_{k=1}^{m}
	\binom{m}{k}
	{(-1)^{k+1} \over k}.
\end{align*}
Plugging the above identity into \eqref{ngini-exp} gives
\begin{align}
	\mathbb{E}[\widehat{IG}_m]
	&=
	{n\lambda^n\over m} \, 
		\sum_{k=1}^{m}
\binom{m}{k}
{(-1)^{k+1} \over k}
	\int_0^\infty
	{1\over (z+\lambda)^{n+1}}
	{\rm d}z
	-
	{n\lambda^n\over m} \,
	{1\over m}
	\int_0^\infty
	{1\over (z+\lambda)^{n+1}}
	{\rm d}z
	\nonumber
	\\[0,2cm]
	&=
		{1\over m} \, 
		\left[
	\sum_{k=1}^{m}
	\binom{m}{k}
	{(-1)^{k+1} \over k}
	-
	{1\over m}
	\right]
	=	
	{IG}_m,
\end{align}
where the last equality follows from Equation \eqref{nGini-exponential}.
This completes the proof.
\end{proof}

\begin{remark}
	By setting $m = 2$ in Corollary \ref{main-theorem-1} yields
\begin{align*}
	\mathbb{E}[\widehat{G}]
	=		
	{1\over 2} 
	=
	G,
\end{align*}	
 for exponential distributions,
	which is a well-established result in the literature \citep{Deltas2003}.
\end{remark}

\begin{corollary}\label{main-theorem-2}
	Let $X_1, X_2,\ldots,X_m$ be independent copies of $X\sim\text{Gamma}(\alpha,\lambda)$. We have:
	\begin{align*}
		\mathbb{E}[\widehat{IG}_m]
		=		
		{1\over m\alpha\Gamma^m(\alpha)} 
\left[
{	\displaystyle
	\int_0^\infty 
	\{
	\Gamma^m(\alpha)-\gamma^m(\alpha,s)
	\}
	{\rm d}s
	-
	\int_0^\infty 
	\Gamma^m(\alpha,s) {\rm d}s
}
\right]
		=
		IG_m,
	\end{align*}
	where $IG_m$ is the $m$th Gini index given in Proposition \ref{nGini-gamma}. Therefore, the estimator $\widehat{IG}_m$ is unbiased for  gamma populations.
\end{corollary}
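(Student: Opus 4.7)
The plan is to apply Theorem \ref{main-theorem} directly, using the well-known form of the Laplace transform of the gamma distribution together with the incomplete gamma function representation of the conditional Laplace-type expectations. Specifically, for $X\sim\text{Gamma}(\alpha,\lambda)$ I first note that
\begin{align*}
\mathscr{L}_F(z)=\left({\lambda\over z+\lambda}\right)^{\!\alpha},
\end{align*}
and, via the substitution $u=(z+\lambda)x$ inside the defining integral, I would compute
\begin{align*}
\mathbb{E}\!\left[\mathds{1}_{X^{-1}((-\infty,t])}\exp(-Xz)\right]
={\lambda^\alpha\over \Gamma(\alpha)(z+\lambda)^\alpha}\,\gamma\!\left(\alpha,(z+\lambda)t\right),
\end{align*}
and analogously with $\gamma$ replaced by $\Gamma(\alpha,\cdot)$ for the conjugate event $\{X\geqslant t\}$. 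This reduces every ingredient of Theorem \ref{main-theorem} to closed form involving only $z$ and the incomplete gamma functions.

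Next I would substitute these expressions into the formula of Theorem \ref{main-theorem}. A key observation is that $\mathscr{L}_F^m(z)=[\lambda^\alpha/\{\Gamma(\alpha)(z+\lambda)^\alpha\}]^m\,\Gamma^m(\alpha)$, so the first bracket in Theorem \ref{main-theorem} factors neatly as
\begin{align*}
\left({\lambda^\alpha\over\Gamma(\alpha)(z+\lambda)^\alpha}\right)^{\!m}\!\bigl[\Gamma^m(\alpha)-\gamma^m(\alpha,(z+\lambda)t)\bigr].
\end{align*}
I would then make the change of variable $s=(z+\lambda)t$ inside the $t$-integral. This is the decisive step: it completely decouples $s$ from $z$, pulling a Jacobian factor $1/(z+\lambda)$ out and leaving the $s$-integrals $\int_0^\infty[\Gamma^m(\alpha)-\gamma^m(\alpha,s)]{\rm d}s$ and $\int_0^\infty\Gamma^m(\alpha,s){\rm d}s$, which are precisely the quantities appearing in Proposition \ref{nGini-gamma}.

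After that decoupling, the remaining $z$-integral collapses to an elementary beta-type computation: using $\mathscr{L}_F^{n-m}(z)=\lambda^{(n-m)\alpha}/(z+\lambda)^{(n-m)\alpha}$ and combining with the factor $\lambda^{m\alpha}/(z+\lambda)^{m\alpha+1}$ from above, one obtains
\begin{align*}
\int_0^\infty {\lambda^{n\alpha}\over (z+\lambda)^{n\alpha+1}}\,{\rm d}z={1\over n\alpha}.
\end{align*}
Multiplying by the prefactor $n/m$ from Theorem \ref{main-theorem} produces exactly the constant $1/(m\alpha\Gamma^m(\alpha))$, so assembling both terms gives precisely the expression for $IG_m$ stated in Proposition \ref{nGini-gamma}, establishing unbiasedness.

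I expect the main obstacle to be purely bookkeeping: ensuring that the factors of $\lambda^\alpha$, $\Gamma(\alpha)$, and powers of $(z+\lambda)$ combine correctly after the change of variable $s=(z+\lambda)t$, so that the $z$-dependence in the inner integrand fully disappears into the clean factor $1/(z+\lambda)^{n\alpha+1}$. No deep analytical difficulty arises, since Tonelli's theorem (invoked in the proof of Theorem \ref{main-theorem}) already legitimizes all interchanges of integration; the argument is essentially a direct specialization coupled with a single variable change that exposes the gamma index formula of Proposition \ref{nGini-gamma}.
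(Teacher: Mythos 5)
Your proposal is correct and follows essentially the same route as the paper: specialize Theorem \ref{main-theorem} using $\mathscr{L}_F(z)=(\lambda/(z+\lambda))^\alpha$ and the incomplete-gamma expressions for the truncated Laplace-type expectations, decouple via $s=(z+\lambda)t$, and evaluate $\int_0^\infty \lambda^{n\alpha}(z+\lambda)^{-(n\alpha+1)}\,{\rm d}z=1/(n\alpha)$ to recover the expression in Proposition \ref{nGini-gamma}. The bookkeeping you flag does work out exactly as you anticipate, so no gap remains.
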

\begin{proof}
	When $X\sim\text{Gamma}(\alpha,\lambda)$, direct computation shows
	\begin{align*}
		\mathbb{E}\left[
		\mathds{1}_{X^{-1}((-\infty,t])}
		\exp\left(-Xz\right)
		\right]
		=
		{\lambda^\alpha\over (z+\lambda)^\alpha \Gamma(\alpha)} \,
		\gamma(\alpha,(z+\lambda)t) 
	\end{align*}
	and
	\begin{align*}
		\mathbb{E}\left[
		\mathds{1}_{X^{-1}([t,\infty))}
		\exp\left(-Xz\right)
		\right]
		=
		{\lambda^\alpha\over (z+\lambda)^\alpha \Gamma(\alpha)} 
		\, 
		\Gamma(\alpha,(z+\lambda)t).
	\end{align*}
	Given $\mathscr{L}_F(z) = \lambda^\alpha/(z + \lambda)^\alpha$, Theorem \ref{main-theorem} yields
	\begin{align*}
		\mathbb{E}[\widehat{IG}_m]
		&=
		{n\over m} \, 
		{\lambda^{\alpha n} \over \Gamma^m(\alpha)}
		\int_0^\infty
		\int_0^\infty 
		\left\{
		\Gamma^m(\alpha)
		-
\gamma^m(\alpha,(z+\lambda)t) 
		\right\}
		{\rm d}t \,
		{1\over (z+\lambda)^{\alpha n}}
		{\rm d}z
		\\[0,2cm]
		&-
		{n\over m} \,
		{\lambda^{\alpha n}\over \Gamma^m(\alpha)}
		\int_0^\infty
		\int_0^\infty 
\Gamma^m(\alpha,(z+\lambda)t) 
		{\rm d}t \,
		{1\over (z+\lambda)^{\alpha n}}
		{\rm d}z.
	\end{align*}
	Now, making the change of variable $u=(z+\lambda)t$, the above identity can be written as
		\begin{align*}
		\mathbb{E}[\widehat{IG}_m]
		&=
		{n\over m} \, 
		{\lambda^{\alpha n} \over \Gamma^m(\alpha)}
		\int_0^\infty
		\int_0^\infty 
		\left\{
		\Gamma^m(\alpha)
		-
		\gamma^m(\alpha,u) 
		\right\}
		{\rm d}u \,
		{1\over (z+\lambda)^{\alpha n+1}}
		{\rm d}z
		\\[0,2cm]
		&-
		{n\over m} \,
		{\lambda^{\alpha n}\over \Gamma^m(\alpha)}
		\int_0^\infty
		\int_0^\infty 
		\Gamma^m(\alpha,u) 
		{\rm d}u \,
		{1\over (z+\lambda)^{\alpha n+1}}
		{\rm d}z
		\\[0,2cm]
		&=
				{1\over m\alpha\Gamma^m(\alpha)} 
		\left[
		{	\displaystyle
			\int_0^\infty 
			\{
			\Gamma^m(\alpha)-\gamma^m(\alpha,u)
			\}
			{\rm d}u
			-
			\int_0^\infty 
			\Gamma^m(\alpha,u) {\rm d}u
		}
		\right]
		=
		IG_m,
	\end{align*}
	where the last equality follows from Proposition \ref{nGini-gamma}.
	Thus, the proof is complete.
\end{proof}

\begin{remark}	
	By taking $m = 2$ in Corollary \ref{main-theorem-2}, it follows from \eqref{main-formula} that
	\begin{align*}
		\mathbb{E}[\widehat{G}]
		=
				\mathbb{E}[\widehat{IG}_2]
		=		
		{1\over 2\alpha\Gamma^2(\alpha)} 
\left[
{	\displaystyle
	\int_0^\infty 
	\{
	\Gamma^2(\alpha)-\gamma^2(\alpha,s)
	\}
	{\rm d}s
	-
	\int_0^\infty 
	\Gamma^2(\alpha,s) {\rm d}s
}
\right]
		=
			{\Gamma(\alpha+{1\over 2})\over\sqrt{\pi}\alpha\Gamma(\alpha)}
			=
		G,
	\end{align*}	
	 for gamma distributions,
	which is a well-known result in the literature  \citep[see, e.g,][]{Baydil2025,Vila2025}.
\end{remark}

\section{Illustrative simulation study}\label{sec:04}

In this section, we present an illustrative Monte Carlo simulation study to evaluate the performance of the $m$th Gini index estimator $\widehat{IG}_m$ in finite samples. Specifically, we want to confirm via simulations the unbiasedness of the $m$th Gini index estimator. We assess the bias and mean squared error (MSE) of the estimator for $m=3$ under two settings: exponential and gamma populations. The bias and MSE are computed as follows:
\[
\widehat{\text{Bias}}(\widehat{IG}_m) = \frac{1}{N_{\text{sim}}} \sum_{k=1}^{N_{\text{sim}}} {\widehat{IG}_m^{(k)} - IG_m},
\quad
\widehat{\text{MSE}}(\widehat{IG}_m) = \frac{1}{N_{\text{sim}}} \sum_{k=1}^{N_{\text{sim}}} \big(\widehat{IG}_m^{(k)} - IG_m\big)^2,
\]
where $\widehat{IG}_m^{(k)}$ denotes the $k$-th Monte Carlo replicate of the sample $m$th Gini index estimator defined in Equation~\eqref{estimator}, and $IG_m$ represents the true $m$th Gini index defined in Equation~\eqref{nGini-exponential} for the exponential distribution and in Equation~\eqref{nGini-gamma-eq} for the gamma distribution. Here, $N_{\text{sim}}$ is the number of Monte Carlo replications.

For the exponential case, random samples were generated from the $\text{Exp}(\lambda=1)$ distribution, whereas for the gamma case, samples were drawn from the $\text{Gamma}(\alpha=2, \lambda=1)$ distribution. For each considered sample size $n \in \{5, 10, 30, 50, 100\}$, we conducted $N_{\text{sim}}=1000$ replications. The parameter $\alpha$ in Equation~\eqref{nGini-gamma-eq} is assumed to be known.

Table~\ref{tab:mc-results} presents the results of the Monte Carlo simulation study. Consistent with the theoretical unbiasedness established in Section~\ref{sec:03}, the empirical bias is close to zero and decreases as the sample size increases. Similarly, the MSE decreases as the sample size grows.


\begin{table}[!h]
\centering
\caption{Empirical bias and MSE of $\widehat{IG}_m$ ($m=3$) based on the indicated distributions.}
\label{tab:mc-results}
\begin{tabular}[t]{cccc}
\toprule
Distribution & $n$ & Bias & MSE \\
\midrule
\rowcolor{gray!10} Exponential & 5 & -0.00395 & 0.02131 \\
Exponential & 10 & -0.00179 & 0.00929 \\
\rowcolor{gray!10} Exponential & 30 & -0.00133 & 0.00306 \\
Exponential & 50 & -0.00226 & 0.00174 \\
\rowcolor{gray!10} Exponential & 100 & 0.00004 & 0.00087 \\
\addlinespace
\rowcolor{gray!10} Gamma & 5 & 0.00866 & 0.01604 \\
Gamma & 10 & 0.00441 & 0.00664 \\
\rowcolor{gray!10} Gamma & 30 & -0.00073 & 0.00191 \\
Gamma & 50 & 0.00236 & 0.00111 \\
\rowcolor{gray!10} Gamma & 100 & 0.00005 & 0.00053 \\
\bottomrule
\end{tabular}
\end{table}

\section{Concluding remarks}\label{sec:05}

We have studied the theoretical properties of the $m$th Gini index and its sample estimator $\widehat{IG}_m$, which generalizes the classical Gini index by considering differences between the maximum and minimum values of random samples of size $m$. We have shown that $\widehat{IG}_m$ is an unbiased estimator of the corresponding population parameter $IG_m$ under gamma-distributed populations. Our results have extended and generalized the findings of \cite{Baydil2025}, who investigated the unbiasedness of the traditional Gini index estimator for the population Gini coefficient in gamma distributions. We have validated our theoretical results through an illustrative Monte Carlo simulation study. The simulation confirmed the unbiasedness of the estimator $\widehat{IG}_m$ in finite samples, under both exponential and gamma settings.

%


\paragraph*{Acknowledgements}
The research was supported in part by CNPq and CAPES grants from the Brazilian government.

\paragraph*{Disclosure statement}
There are no conflicts of interest to disclose.





\end{document}